
\documentclass[preprint,11pt]{elsarticle}




\usepackage{graphicx}
\usepackage{setspace}
\usepackage{amsmath} 
\usepackage{amssymb,nicefrac}
\usepackage{amsfonts}
\usepackage{color}
\usepackage{subfigure}
\usepackage[dvips]{epsfig}
\usepackage[dvips]{graphicx}
\usepackage[section]{placeins}
\usepackage{dsfont}
\usepackage{mathrsfs}
\usepackage{sgame}
\usepackage{float}
\usepackage[T1]{fontenc}
\usepackage{aurical}
\usepackage{tikz,pgfplots,verbatim}
\usepackage[linesnumbered,ruled]{algorithm2e}
\usepackage{multirow}
\usepackage{epstopdf}


\journal{Energy and Buildings}


\newtheorem{proposition}{Proposition}[section]

\newenvironment{proof}{\textbf{Proof.}}{$\square$\\}


\definecolor{Yellow}{rgb}{1, 1, 0}
\definecolor{VeryLightGray}{gray}{.90}
\definecolor{LightGray}{gray}{.7}
\definecolor{Gray}{gray}{.50}
\definecolor{DarkGray}{gray}{.3}
\definecolor{VeryDarkGray}{gray}{.10}


\newcommand{\tr}{^{\mathrm T}}


\newcommand{\magn}[1]{\left\vert #1 \right\vert}

\DeclareGraphicsExtensions{.pdf,.jpeg,.png,.jpg}  


\begin{document}

\begin{frontmatter}



\title{Generalized Online Transfer Learning for Climate Control in Residential Buildings\tnoteref{t1,t2}}

\tnotetext[t1]{The research reported in this paper has been supported by the Austrian Ministry for Transport, Innovation and Technology, the Federal Ministry of Science, Research and Economy, and the Province of Upper Austria in the frame of the COMET center SCCH.}
 \tnotetext[t2]{An earlier version of part of this paper appeared in \cite{grubinger2016}.}
    


\author[scch]{Thomas Grubinger}
\ead{thomas.grubinger@scch.at}
   
\author[scch]{Georgios C. Chasparis\corref{cor1}}
\ead{georgios.chasparis@scch.at}
\author[scch]{Thomas Natschl\"{a}ger}
\ead{thomas.natschlaeger@scch.at}

\address[scch]{Department of Data Analysis Systems, Software Competence Center Hagenberg GmbH, Softwarepark 21, A-4232 Hagenberg, Austria}

\cortext[cor1]{Corresponding author}

\begin{abstract}
This paper presents an online transfer learning framework for improving temperature predictions in residential buildings. In transfer learning, prediction models trained under a set of available data from a target domain (e.g., house with limited data) can be improved through the use of data generated from similar source domains (e.g., houses with rich data). Given also the need for prediction models that can be trained online (e.g., as part of a model-predictive-control implementation), this paper introduces the generalized online transfer learning algorithm (GOTL). It employs a weighted combination of the available predictors (i.e., the target and source predictors) and guarantees convergence to the best weighted predictor. Furthermore, the use of Transfer Component Analysis (TCA) allows for using more than a single source domains, since it may facilitate the fit of a single model on more than one source domains (houses). This allows GOTL to transfer knowledge from more than one source domains. We further validate our results through experiments in climate control for residential buildings and show that GOTL may lead to non-negligible energy savings for given comfort levels.
\end{abstract}

\begin{keyword}
Climate control in buildings \sep Online transfer learning \sep Model predictive control




\end{keyword}

\end{frontmatter}



\section{Introduction}	\label{sec:Introduction}

Recent studies on climate control (heating/cooling) in residential buildings have demonstrated the importance of accurate system identification and prediction for energy savings \cite{NghiemPappas11,Oldewurtel12,TouretzkyBaldea13,ChasparisNatschlaeger16,Martinez14}. In fact, there have been several efforts on exploiting the benefits of such prediction schemes through the development of \emph{model-predictive-control} (MPC) approaches \cite{NghiemPappas11,Oldewurtel12,TouretzkyBaldea13}, where predictions of the temperature evolution, weather conditions and user behavior can be incorporated directly into the control design. 

The current trend on system identification and prediction in a residential building (\emph{target house}) exploits measurements collected during normal operation of the heating/cooling system. Several identification schemes have been used to generate predictions, including the MIMO ARMAX model \cite{YiuWang07}, ARX models \cite{Malisani10} and the neural network approach \cite{Mustafaraj11}. 

Although linear transfer function models are the most commonly used models for system identification in residential buildings (due to the resulting simplified MPC design), changes in weather conditions and/or the heating patterns may give rise to nonlinear phenomena and consequently to variations in the prediction performance. This observation has been pointed out by several authors, leading to more detailed identification schemes, such as 
the multiple-time scale analysis presented in \cite{TouretzkyBaldea13}, the more detailed models of HVAC systems discussed in \cite{Scotton13} and the nonlinear regression models developed in \cite{ChasparisNatschlaeger16}.

Variations in the weather conditions and/or heating patterns may always occur throughout the year. The reliability of the (\emph{target house's}) prediction model may be improved by incorporating the formulation of data/prediction models from other residential buildings (\emph{source houses}). To this end, this paper addresses the following question: \emph{In what form such a ``knowledge transfer'' from a source house to a target house should be performed and under which conditions it could be beneficial in terms of the resulting prediction performance?} 

Such knowledge transfer objective is usually encountered in machine learning and it may take alternative forms depending on the application itself. In particular, knowledge transfer can usually be performed within the context of \emph{Transfer Learning}~\cite{pan2010survey}. Generally, transfer learning aims at transferring knowledge from previous (source) tasks to a target task when the latter has limited training data. Transfer learning has received a lot of attention in recent years and has successfully been used in several applications, such as indoor localization \cite{pan_transfer_2008}, image processing \cite{hinton2007using}, land-mine detection \cite{grubinger2015domain} and biological applications \cite{muandet2013domain}.

Most transfer learning approaches can be referred to as \emph{offline approaches}, since learning is performed offline. However, in the context of climate control in residential buildings, data are usually collected continuously and knowledge transfer needs to be implemented in an \emph{online} fashion. Thus, it is necessary to develop an online transfer learning methodology that will be particularly appropriate for knowledge transfer between different houses. To the best of our knowledge, only the \emph{Online Transfer Learning (OTL)}~\cite{zhao2010otl} method addresses an online learning case. It uses a weighted prediction of an offline classifier (learned on the source scenario data) and an incrementally updated online classifier on the target scenario data. Weighted predictors are also common in \emph{ensemble learning} methods \cite{dietterich2000ensemble,fern2003online,fan1999application,littlestone1988learning,littlestone1994weighted}, however predictors are constructed from a \emph{single} dataset (thus, they are not directly related to transfer learning). An online algorithm for the case of \emph{multitask learning}~\cite{evgeniou2007multi} was introduced by Dekel~{\em et al.}~\cite{dekel2007online}. In contrast to transfer learning, multitask learning addresses the problem of learning different tasks in parallel. Furthermore, Dekel's method is designed for classification tasks, while climate control requires regression tasks. 

Given the absence of online transfer learning methodologies that can be directly applied for knowledge transfer for climate control in residential buildings, \emph{this paper develops an online transfer learning algorithm that is particularly appropriate for climate control}. In particular, we introduce \emph{generalized online transfer learning} (GOTL) that is based on a weighted combination of: (1)~an offline model (linear regression model learned on the source data (with data collected over an extended time horizon), and (2)~an online regressor (recursive least squares, cf.,~\cite{Sayed03}) learned on the target house -- which is incrementally updated as new data arrives. The proposed algorithm is related to the OTL algorithm of \cite{zhao2010otl}. However, in \cite{zhao2010otl}, the weighted predictor is only eligible for classification, while our framework is applicable for both classification and regression. Furthermore, our online transfer learning scheme guarantees convergence to the globally optimal weights for the predictors. 

The proposed GOTL algorithm is not limited to the use of a single source house. Clearly, standard machine learning methods cannot be applied directly on the combined data of different source houses, as the different datasets were not necessarily generated by the same process. However, (offline) {\em domain generalization}~\cite{muandet2013domain} methods can be used to jointly model data from different source domains. In this paper, we use the domain generalization method {\em Transfer Component Analysis (TCA)}~\cite{pan2011domain,grubinger2015domain}. TCA and domain generalization methods in general, aim to transform data from different domains into a shared subspace, where the distributions of the domains are similar. Once transformed, a (linear) regression model can be constructed on the joint source house data.


Our contributions can be summarized as follows: (1) We propose an online transfer learning methodology (GOTL) that is appropriate for knowledge transfer between residential buildings. To the best of our knowledge, this is the first attempt to address knowledge transfer in this application domain. 
(2) The proposed algorithm builds upon the online transfer learning methodology of \cite{zhao2010otl} and (a) guarantees convergence to the global optimum combination, (b) it addresses both classification and regression tasks. (3) We demonstrate through experiments that is at least as good as either one of the source house and target house predictor. (4) We show that the domain generalization technique TCA can be used to learn a joint model from several source houses, which proves to be even more beneficial for the target task. (5) We demonstrate that the improvement in predictive accuracy translates into non-negligible energy savings for given comfort levels. This paper extends  previous work of the authors~\cite{grubinger2016} by contributions (4) and (5) and provides a more detailed presentation of the already published work.


In the remainder of the paper, Section~\ref{sec:Framework} provides the framework and the main objective of this paper. Section~\ref{sec:OnlineTransferLearning} describes the proposed online transfer learning methodology (GOTL) from a single source domain. In Section~\ref{sec:OTLfromMultipleSourceDomains}, we extend our methodology to the case of multiple source domains through the use of TCA. Section~\ref{sec:ExperimentalSetup} provides a description on the experimental framework and Section~\ref{sec:Results} demonstrates the experimental results. Finally, Section~\ref{sec:Conclusions} presents concluding remarks and future work.


\section{Framework \& Objective}	\label{sec:Framework}

In this paper, we are interested in the development of a \emph{prediction (input-output transfer) model of the heat-mass transfer dynamics in residential buildings}, that also exploits data collected from other (not necessarily similar) houses. Such prediction models can be used within an MPC and provide predictions of the indoor temperature over an optimization horizon of interest (e.g., several hours ahead). 

A prediction (input-output transfer) model of the heat-mass transfer dynamics in buildings can be formulated in the following generic form:
\begin{eqnarray}	\label{eq:GenericPredictor}
\hat{y}_t & = & f(y_{t-1},...,y_{t-\ell},\mathbf{u}_{t-1},...,\mathbf{u}_{t-\ell}),
\end{eqnarray}
where $y\in\mathbb{R}$ denotes the variable of interest, $\hat{y}\in\mathbb{R}$ denotes the estimate of $y$ and $\mathbf{u}\in\mathbb{R}^{1\times{m}}$ denotes the control inputs/disturbances. The predictor $f:\mathbb{R}^{\ell + \ell m}\to\mathbb{R}$ is a linear or nonlinear prediction model. 

Under the assumption that measurements are perturbed by a white measurement noise, predictors of the form (\ref{eq:GenericPredictor}) correspond to the \emph{maximum a posteriori} predictor (as in the case of the Output-Error regression model \cite[Chapter~4]{Ljung99}). For example, in the context of climate control in buildings, $y$ corresponds to the indoor temperature that needs to be regulated, while $\mathbf{u}$ may include all variables directly or indirectly affecting the indoor temperature, such as the flow of the thermal medium, the inlet/outlet temperatures of the thermal medium, the occupants presence, the outdoor temperature and the solar gain.

Let us assume that measurements of the inputs and output are collected at regular time instances $T_s,2T_s,...$, briefly denoted by $t=1,2,...$, where $T_s$ corresponds to the sampling period. Assuming that an MPC is implemented for temperature control, let $T_{\rm hor} \doteq M T_s$ denote our optimization horizon over which predictions are requested for some large $M \in \mathbb{N}$. We denote the time instances at which predictions are requested by $t_k$, $k=1,2,...$ such that $t_k = k M$. Note that $t_k$ is a subsequence of the time index $t$, thus predictions are requested over $M,2M,...$ time instances. We will often refer to these time instances as the \emph{evaluation/optimization instances}. 

To minimize notation, we briefly denote
\begin{equation} \label{eq:GenericX}
\mathbf{x}_t \doteq \left(\begin{array}{cccccc}
y_{t-1} & \cdots & y_{t-\ell} & \mathbf{u}_{t-1} & \cdots & \mathbf{u}_{t-\ell} 
\end{array}\right).
\end{equation}
Since we will be concerned with providing predictions at time instances $t_k$, $k=1,2,...$, we will compactly denote the data available at those time instances by 
\begin{equation}
\mathbf{X}_{k} \doteq \left[\begin{array}{c}
\mathbf{x}_0 \\
\mathbf{x}_1 \\
\vdots \\
\mathbf{x}_{t_k}
\end{array}\right],
\end{equation}
for all $k=1,2,...$ which correspond to time instances $t_k$.

In the remainder of the paper, we will assume that data are available both from a \emph{target} house, where we wish to minimize the prediction error, and one or more \emph{source} house(s), denoted by the symbol $S$. We assume that the input-output variables measured in both the source house(s) and target house are the same. Let $\mathbf{X}_{S}$ and $\mathbf{X}_{k}$ be the corresponding \emph{feature data} available at $t_k$. Since we are concerned with \emph{transfer learning}, the data set $\mathbf{X}_{S}$ is assumed a-priori fixed. However, the forthcoming analysis can be extended in a straightforward manner to the case that this data set also varies with time.

We introduce two prediction models: (1) $f_{k}(\cdot)$ is a prediction model that is trained online at time instances $t_k$, $k=1,2,...$ over the currently available data $\mathbf{X}_{k}$ from the target house. (2) $f_{S}(\cdot)$ denotes a prediction model that is trained offline over the a-priori available data set $\mathbf{X}_{S}$. In case of a single source house the predictor is taking the form $f_{S}(\cdot) = h_{S}(\cdot)$, where $h_{S}(\cdot)$ is a supervised prediction function. In case of multiple source houses $f_{S}(\cdot) = h_{S}(\theta(\cdot))$, where $\theta(\cdot)$ is a domain generalization function, which maps the data from the different houses into a shared subspace. Note that in this case, the source data are $\mathbf{X}_{S} = \mathbf{X}_{S_1}\cup \mathbf{X}_{S_2} \cup ... \cup \mathbf{X}_{S_D}$, where $D$ is the number of source houses. The definition of the domain generalization function and its role will be discussed in detail in the forthcoming Section~\ref{sec:GOTLMultipleSourceDomains}.


The goal of this paper is the computation of a new (combined) predictor for the \emph{target} house, evaluated at time instances $t_k$, $k=1,2,...$, which admits the generic form $F_{k}(\cdot)=F_{k}(f_{k}(\cdot),f_{S}(\cdot)).$ 
We wish to address the following optimization problem:
\begin{equation}	\label{eq:ProblemFormulation}
\min_{\{F_{k}\}} \quad \sum_{k=1}^{K}\sum_{t=t_{k}+1}^{t_{k}+M}\delta^{t_k+M-t}\magn{F_{k}(\mathbf{\hat{x}}_{t|k})-{y}_{t}}^2,
\end{equation}
on the target house, where $K$ is the number of evaluation intervals considered and $\delta \in (0,1]$ is a discount factor. As in an MPC implementation, the measurements $y$ are only available at the beginning of each evaluation interval $k$, i.e., at time $t_k$, while the performance of a prediction model $F_k$ has to be evaluated over an optimization horizon of $M$ steps ahead. Thus, an estimate $\hat{\mathbf{x}}_{t|k}$ of $\mathbf{x}_{t}$ is used in the formulation of the predictions, which is defined as 
\begin{equation*}
\mathbf{\hat{x}}_{t|k} \doteq \left(\begin{array}{cccccc}
\tilde{y}_{t-1|k} & \cdots & \tilde{y}_{t-\ell|k} & \mathbf{\tilde{u}}_{t-1|k} & \cdots & \mathbf{\tilde{u}}_{t-\ell|k} 
\end{array}\right),
\end{equation*} 
\begin{equation*}
\mbox{where} \qquad
\tilde{y}_{t|k} \doteq \begin{cases} 
y_t  & \mbox{if } t \le t_k \\
\hat{y}_{t|k} & \mbox{if } t  >  t_k.
\end{cases}
\end{equation*} 
Here, $\hat{y}_{t|k}$ are generated as $\hat{y}_{t|k}=F_k(\hat{\mathbf{x}}_{t|k}),$ where the prediction at time instance $t$ are given by the prediction function $f_k(\cdot)$, which has been trained using all data up to time instance $t_k$. Regarding the estimates of the control inputs and the exogenous disturbances summarized in $\tilde{\mathbf{u}}_{t|k}$, we consider perfect estimates, since we would like to investigate the prediction performance of $F_k$ over $y$. Therefore, we set $\tilde{\mathbf{u}}_{t-j|k} = {\mathbf{u}_{t-j}}, j \in \{1,...,l\}$.

\section{Online Transfer Learning from Single Source Domain}		\label{sec:OnlineTransferLearning}

\subsection{Generalized Online Transfer Learning (GOTL)} 	\label{sec:GOTL}

Obviously, when addressing an optimization problem of the form (\ref{eq:ProblemFormulation}), the optimal choice of a combined predictor may not be computed a-priori, i.e., before receiving the measurements $\{y_t\}_{t=t_1}^{t_k}$ from the target house. Thus, an online optimization scheme is required. Besides, the choice of an optimal predictor may change frequently with time, hence requiring frequent revisions of the combined predictor. 

In this paper, we propose a \emph{transfer learning} algorithm that addresses the generic problem formulation of (\ref{eq:ProblemFormulation}) in an \emph{online} fashion under the structural constraint of the form:
\begin{equation}	\label{eq:WeightedPredictor}
F_{k}(\mathbf{x}_t;\alpha_k) \doteq (1-\alpha_k) f_{k}(\mathbf{x}_t) + \alpha_k f_{S}(\mathbf{x}_t),
\end{equation}
$\mbox{where } \alpha_k\in\mathcal{A}\doteq\left\{0,\Delta,2\Delta,...,1-\Delta,1\right\},$ is a weight assigned to the source predictor. The constant $\Delta\in(0,1)$ is selected so that $\Delta = \nicefrac{1}{n}$ for some large $n\in\mathbb{N}$. 

In other words, we consider combined predictors that can be represented as a weighted sum of the two available predictors (the source predictor trained offline and the target predictor trained online). In this case, the optimization problem (\ref{eq:ProblemFormulation}) can be translated to an optimization problem over $\{\alpha_k\}$ and takes on the following form:
\begin{equation}	\label{eq:ProblemFormulationWeightedSum}
\min_{\{\alpha_k\}} \sum_{k=1}^{K}\sum_{t=t_{k}+1}^{t_{k}+M}\delta^{t_{k}+M-t}\magn{F_{k}(\hat{\mathbf{x}}_{t|k};\alpha_k)-{y}_{t}}^2.
\end{equation}

The proposed algorithm \emph{Generalized Online Transfer Learning} (GOTL) is motivated by the so-called \emph{adaptive learning} \cite{Young93} defined in games and it is based on the notion of \emph{better reply}. It is described in detail in Table~\ref{Tb:GOTL}. 

\begin{table}[t!]
\fbox{
\begin{minipage}{0.9\textwidth}
At the end of any evaluation interval $k$ (i.e., at time instance $t_{k+1}$), $k=1,2,...,K$:
\begin{enumerate}
  \item Update the history of feature data $\mathbf{X}_{k+1}$. 
  \item Evaluate the currently selected weight $\alpha_{k}$ by computing its \emph{better reply}
  \begin{equation*}
  {\rm B}_{k}(\alpha_k) \doteq \{ \alpha'\in\mathcal{A}(\alpha_k): R_{k}(\alpha') < R_{k}(\alpha_k) \}.
  \end{equation*}

  \item Select a new weight $\alpha_{k+1}$ according to 
  \begin{equation*}
  \alpha_{k+1} \in \begin{cases}
  {\rm B}_k(\alpha_k) & \mbox{if } {\rm B}_k(\alpha_k) \neq \varnothing \\
  \alpha & \mbox{else.}
  \end{cases}
  \end{equation*}
  
  \item Train the online predictor for the target $f_{k}(\cdot)$ by using the updated history $\mathbf{X}_{k+1}$, and get $f_{k+1}(\cdot)$.
  
  \item Define the new combined predictor $F_{k+1}$ for creating predictions over $t\in\{t_{k},...,t_{k+1}\}$ as follows: 
  \begin{eqnarray*}  
   F_{k+1}(\cdot;\alpha_{k+1}) \doteq (1-\alpha_{k+1})f_{k+1}(\cdot) + \alpha_{k+1}f_{S}(\cdot).
  \end{eqnarray*}

  \item Update the time $k\leftarrow{k+1}$ and repeat.

\end{enumerate}
\end{minipage}
} 
\caption{Generalized Online Transfer Learning (GOTL)} \label{Tb:GOTL}
\end{table}

In particular, at the end of every evaluation interval $k=1,2,...$, the current combined predictor $F_{k}$, defined in (\ref{eq:WeightedPredictor}) and employing weight $\alpha_{k}$, is evaluated over the updated history of measurements $\mathbf{X}_{k+1}$, i.e., the measurements collected at the end of the evaluation interval $k$.

Its performance with respect to the prediction error is then compared with the corresponding performances when the weight $\alpha_k$ is slightly perturbed. In particular, the performance of the weight $\alpha_k$ is compared with the corresponding performances of the weights selected from the set $\mathcal{A}(\alpha_k)$ defined as follows:
\begin{equation*}
\mathcal{A}(\alpha_k) \doteq \begin{cases}
\{\alpha_k-\Delta,\alpha_k,\alpha_k+\Delta\} & \mbox{if $\Delta< \alpha_k < 1 - \Delta$}\\
\{\alpha_k,\alpha_k+\Delta\} & \mbox{if $\alpha_k \leq \Delta$} \\
\{\alpha_k-\Delta,\alpha_k\} & \mbox{if $\alpha_k \geq 1-\Delta$}. 
\end{cases}
\end{equation*}
Comparison between the alternative weights in the set $\mathcal{A}$ is performed with respect to the \emph{discounted weighted average squared error}, $R_{k}$, of the combined predictor, defined as
\begin{equation} 	\label{eq:MSE}
R_{k}(\alpha) \doteq \sum_{j=1}^{k} \sum_{t=t_j+1}^{t_j+M}{\delta^{t_j+M-t}\left|F_{j}(\mathbf{\hat{x}}_{t|j};\alpha)-y_t\right|^2},
\end{equation} 
for some $\delta\in(0,1]$. Note that the predictions $F_{j}(\mathbf{\hat{x}}_{t|j};\alpha)$ (in the evaluation interval $j$) are generated using the target predictor $f_j(\cdot)$ which has been trained using data up to time $t_j$, $\mathbf{X}_j$. The predictions of $F_{j}(\mathbf{\hat{x}}_{t|j};\alpha)$ are evaluated over the time interval $t_1,...,t_{j+1}$. Lastly, note that the trained predictor $f_j(\cdot)$ does not depend on $\alpha$, which implies that it does not have to be refitted when $\alpha$ is updated.

The proposed scheme is related to the \emph{online transfer learning} (OTL) algorithm \cite{zhao2010otl}. In reference \cite{zhao2010otl} the weight update is only applicable for classification problems, while GOTL's weight update mechanism is more general to accommodate both regression and classification problems.

\subsection{Convergence behavior}

Let us define the set of \emph{locally-optimal weights} $$\mathcal{A}_k^* \doteq \left\{\alpha\in\{0,\Delta,...,1-\Delta,1\}: {\rm B}_k(\alpha) = \varnothing\right\}.$$ We can show that GOTL converges to a weight in $\mathcal{A}_{k}^*$, as long as the set $\mathcal{A}_{k}^*$ changes sufficiently slow with $k$ as the following proposition shows.
\begin{proposition}[Convergence to Local Minima]		\label{Pr:ConvergenceToLocalMinima}
\textit{For any weight update instance $k^*$, if $\mathcal{A}_{k}^* = \mathcal{A}^* \neq \varnothing$ for every $k=k^*,k^*+1,...,k^*+n$, then $\alpha_{k^*+n} \in\mathcal{A}^*.$}
\end{proposition}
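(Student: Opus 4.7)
The strategy is to exploit (i) the convexity of $R_k(\alpha)$ in $\alpha$ for every $k$, together with (ii) the hypothesis that the local-optimum set $\mathcal{A}^*$ is invariant over the window in question. Together these force the GOTL iterate to perform a monotone one-step descent toward $\mathcal{A}^*$, after which a finiteness argument on the lattice $\mathcal{A}$ closes the proof.

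First I would verify the convexity of $R_k$ in $\alpha$. Because $F_j(\,\cdot\,;\alpha)=(1-\alpha)f_j(\,\cdot\,)+\alpha f_S(\,\cdot\,)$ depends affinely on $\alpha$, each summand in the definition~(\ref{eq:MSE}) has the form $\bigl|b_{j,t}+\alpha a_{j,t}\bigr|^2$ with $a_{j,t}=f_S(\hat{\mathbf{x}}_{t|j})-f_j(\hat{\mathbf{x}}_{t|j})$ and $b_{j,t}=f_j(\hat{\mathbf{x}}_{t|j})-y_t$, which is a convex quadratic in $\alpha$; the weights $\delta^{t_j+M-t}\ge 0$ yield a nonnegative combination, so $R_k(\cdot)$ is convex on $[0,1]$. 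Convexity then implies that $R_k$ restricted to the lattice $\mathcal{A}$ is unimodal, and that $\mathcal{A}_k^*$ coincides with the set of global minimizers of $R_k$ over $\mathcal{A}$, which is a contiguous block of lattice points.

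Next I would combine the invariance hypothesis with this unimodal structure to pin down the better-reply set of every non-optimal iterate. For $\alpha_k<\min\mathcal{A}^*$, a standard convexity argument (inserting $\alpha_k+\Delta$ as a convex combination of $\alpha_k$ and any point of $\mathcal{A}^*$) gives $R_k(\alpha_k+\Delta)<R_k(\alpha_k)$, while a parallel argument applied to $\alpha_k-\Delta$ yields $R_k(\alpha_k-\Delta)>R_k(\alpha_k)$ whenever $\alpha_k-\Delta\ge 0$; hence $B_k(\alpha_k)=\{\alpha_k+\Delta\}$. A symmetric argument shows $B_k(\alpha_k)=\{\alpha_k-\Delta\}$ whenever $\alpha_k>\max\mathcal{A}^*$. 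Consequently each GOTL update strictly reduces the lattice distance from $\alpha_k$ to $\mathcal{A}^*$ by exactly $\Delta$ so long as $\alpha_k\notin\mathcal{A}^*$, and once $\alpha_k\in\mathcal{A}^*$ the better-reply set is empty and the iterate is absorbed, $\alpha_{k+1}=\alpha_k$.

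To conclude, since $|\mathcal{A}|=1/\Delta+1$, at most $1/\Delta$ descent steps separate any $\alpha_{k^*}$ from $\mathcal{A}^*$; taking $n$ at least this large (the regime in which the statement is informative) gives $\alpha_{k^*+n}\in\mathcal{A}^*$. The main technical obstacle is really the unimodality step: one can construct abstract discrete landscapes with invariant $\mathcal{A}^*$ but intervening local maxima whose positions shift between consecutive $k$, causing the iterate to oscillate between non-optimal points. What rules this out in GOTL is precisely the quadratic-in-$\alpha$ structure of $R_k$, so the proof genuinely uses the form of the loss~(\ref{eq:MSE}) and not merely abstract better-reply dynamics.
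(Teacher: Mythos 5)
Your proof is correct, and it is considerably more careful than the paper's own argument, which disposes of the proposition in one sentence as ``a direct implication of the definition of ${\rm B}_k(\cdot)$'' (the iterate needs at most $n$ steps to traverse the lattice). The substantive difference is exactly the point you flag at the end: the paper's one-liner treats this as a statement about abstract better-reply dynamics, but that argument alone has a gap, since $R_k$ changes with $k$ and one can arrange landscapes with a fixed empty-better-reply set $\mathcal{A}^*$ in which the iterate shuttles between two adjacent non-optimal points (the set-valued choice $\alpha_{k+1}\in{\rm B}_k(\alpha_k)$ permits this). Your insertion of the convexity lemma --- each summand of $R_k$ is $\lvert b_{j,t}+\alpha a_{j,t}\rvert^2$, so $R_k$ is a convex quadratic in $\alpha$, its lattice restriction is unimodal, $\mathcal{A}_k^*$ is the contiguous set of global lattice minimizers, and ${\rm B}_k(\alpha_k)$ is the singleton neighbor pointing toward it --- is precisely what rules out oscillation and makes the ``at most $n$ steps'' count valid; it is also the content the paper only establishes afterwards, in the proof of Proposition~\ref{Pr:UniqueMinimizer}, without tying it back to Proposition~\ref{Pr:ConvergenceToLocalMinima}. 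Two minor points: the strict-convexity case where the leading coefficient vanishes is degenerate but harmless (then $R_k$ is constant in $\alpha$ and $\mathcal{A}^*=\mathcal{A}$), and with your argument the minimizer set can in principle contain two adjacent lattice points (a tie), which is consistent with your proof though slightly at odds with the paper's later uniqueness claim.
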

In other words, if the set of locally optimal weights does not change within the next $n$ update steps, then the process will reach a weight within this set.

\begin{proof}
The proof is a direct implication of the definition of the ${\rm B}_k(\cdot)$, since at most $n$ update steps are required for the process to approach a weight $\alpha^*\in\mathcal{A}^*$ starting from any initial weight.
\end{proof}

\begin{proposition}[Unique Minimizer]		\label{Pr:UniqueMinimizer}
\textit{For any update instance $k$,\footnote{For any finite set $\mathcal{A}$, $\magn{\mathcal{A}}$ denotes its cardinality.} $\magn{\mathcal{A}_k^*}=1.$ Furthermore, this is the unique globally optimal weight.}
\end{proposition}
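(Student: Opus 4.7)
The plan is to exploit the observation that $R_k(\alpha)$, as defined in (\ref{eq:MSE}), is a \emph{convex quadratic} in the scalar variable $\alpha$ when all other quantities are fixed. The idea is that once this is established, the characterization of $\mathcal{A}_k^*$ via the emptiness of ${\rm B}_k(\cdot)$ is exactly the characterization of a local minimum on the one-dimensional grid $\{0,\Delta,\dots,1\}$, and for a strictly convex quadratic the unique local minimizer on that grid is also the unique global minimizer.

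First I would substitute the structural form (\ref{eq:WeightedPredictor}) into (\ref{eq:MSE}), writing for each pair $(j,t)$
\begin{equation*}
F_j(\hat{\mathbf{x}}_{t|j};\alpha) - y_t \;=\; \bigl(f_j(\hat{\mathbf{x}}_{t|j}) - y_t\bigr) \;+\; \alpha\bigl(f_S(\hat{\mathbf{x}}_{t|j}) - f_j(\hat{\mathbf{x}}_{t|j})\bigr),
\end{equation*}
so that $R_k(\alpha)$ is a non-negative-weighted sum of squares of affine functions of $\alpha$. Expanding, this gives $R_k(\alpha) = A_k\alpha^2 + B_k\alpha + C_k$ with
\begin{equation*}
A_k \;=\; \sum_{j=1}^{k}\sum_{t=t_j+1}^{t_j+M} \delta^{t_j+M-t}\bigl(f_S(\hat{\mathbf{x}}_{t|j}) - f_j(\hat{\mathbf{x}}_{t|j})\bigr)^2 \;\ge\; 0.
\end{equation*}
The crucial non-degeneracy observation is that $A_k = 0$ would force the offline source predictor and every online target predictor to agree on every evaluation point carrying positive weight, which is excluded in any generic scenario; thus $A_k > 0$ and $R_k$ is \emph{strictly} convex in $\alpha$.

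Next I would translate the better-reply characterization into a local-minimum condition on the grid $\{0,\Delta,\dots,1\}$. By construction $\alpha\in\mathcal{A}_k^*$ iff $R_k(\alpha)\le R_k(\alpha\pm\Delta)$ (with the obvious one-sided version at the endpoints $0$ and $1$). For a strictly convex function on a uniform grid, strict convexity rules out any plateau of adjacent grid minimizers, so the set of grid-points satisfying this inequality is a singleton; call it $\{\alpha_k^\star\}$. Hence $|\mathcal{A}_k^*|\le 1$, and since any discrete minimizer of $R_k$ on the finite set $\mathcal{A}$ satisfies the local inequalities, $|\mathcal{A}_k^*|=1$ and $\alpha_k^\star$ is the unique global minimizer.

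The part I expect to require the most care is justifying that $A_k>0$, i.e., ruling out the pathological case in which the target and source predictors coincide on all weighted evaluation points—this is what promotes convexity to strict convexity and eliminates the possibility of a plateau of neighbouring grid minimizers. If the authors wish to avoid a non-degeneracy hypothesis altogether, one could instead state the result as ``the global minimizer on $\mathcal{A}$ is unique and $\mathcal{A}_k^*$ equals this singleton,'' and handle the degenerate $A_k=0$ case separately by noting that then $R_k$ is affine, so the minimum on $\mathcal{A}$ is attained at an endpoint and, unless $R_k$ is identically constant, is unique. The remaining steps (convex-quadratic expansion, and the one-line argument that local = global on a grid under strict convexity) are routine.
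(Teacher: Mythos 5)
Your proof takes essentially the same route as the paper: substitute the weighted-predictor form into $R_k$, expand it as a quadratic in $\alpha$, and conclude uniqueness of the minimizer from convexity. You are in fact more careful than the paper at the one delicate point: the paper only observes that the second derivative --- which in its notation equals $2\sum_{j=1}^{k}\mathbb{E}_j\{(e_{S,j}-e_{j})^2\}$, i.e.\ your $2A_k$ --- is \emph{nonnegative} and then asserts a unique minimizer, which does not follow when that coefficient vanishes (the quadratic degenerates to an affine, possibly constant, function); your explicit non-degeneracy condition $A_k>0$, together with your separate treatment of the affine case, is exactly what is needed for the stated conclusion. Your final step relating the better-reply definition of $\mathcal{A}_k^*$ to local minimality on the grid $\{0,\Delta,\dots,1\}$, and hence to global minimality via strict convexity, is also left implicit in the paper and is worth spelling out as you do.
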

\begin{proof}
Let us denote the estimation error functions: $e_{j}(t) \doteq f_{j}(\hat{\mathbf{x}}_{t|j}) - y_t,$ and $e_{S,j}(t) \doteq f_{S}(\hat{\mathbf{x}}_{t|j}) - y_t$ for the target and the source predictor, respectively, when evaluated over the evaluation interval $j=1,2,...$. Let also $\mathbb{E}_j\{\cdot\}$ denote the discounted weighted average of the values of a random variable evaluated over the evaluation interval $j$, i.e., $\mathbb{E}_j\{e_j\}\doteq\sum_{t=t_j+1}^{t_j+M}\delta^{t_j+M-t}\{e_j(t)\}.$ Then, the optimization problem $\min_{\alpha\in\mathcal{A}}R_{k}(\alpha)$ can equivalently be written as:
\begin{equation}	\label{eq:PerformanceEquivalentCriterion}
\min_{\alpha\in\mathcal{A}} \sum_{j=1}^{k}\mathbb{E}_j\left\{\magn{(1-\alpha) e_{j} + \alpha e_{S,j}}^2 -  
\magn{e_{j}}^2\right\},
\end{equation}
since the estimation error of the target predictor, $e_{j}(t)$, is independent of $\alpha$. Note further that:
\begin{eqnarray*} 
\lefteqn{\magn{(1-\alpha) e_{j} + \alpha e_{S,j}}^2 - \magn{e_{j}}^2 = } \cr &&
\alpha^2 e_{S,j}^2 - \alpha(2-\alpha)e_{j}^2 + 2\alpha(1-\alpha)e_{S,j}e_{j}.
\end{eqnarray*}
Thus, the optimization (\ref{eq:PerformanceEquivalentCriterion}) can equivalently be written as:
\begin{eqnarray*}	\label{eq:PerformanceEquivalentCriterion2}
\min_{\alpha\in\mathcal{A}} \Big\{ \alpha^2\sum_{j=1}^{k}\mathbb{E}_j\left\{e_{S,j}^2\right\} - \alpha(2-\alpha) \sum_{j=1}^{k}\mathbb{E}_j\left\{e_{j}^2\right\} + \cr 2\alpha(1-\alpha) \sum_{j=1}^{k}\mathbb{E}_j\left\{e_{j}e_{S,j}\right\} \Big\}.
\end{eqnarray*}
Note that the latter objective function is quadratic with respect to $\alpha$ and its second gradient with respect to $\alpha$ is nonnegative. 
This implies that the optimization $\min_{\alpha\in\mathcal{A}}R_k(\alpha)$ admits a unique minimizer.
\end{proof}

It is also straightforward to check that under the hypotheses of Proposition~\ref{Pr:ConvergenceToLocalMinima}, the weight update approaches the unique minimizer after a finite number of steps.

\subsection{Discussion}

Note that the unique minimizer may change with $k$, depending on the conditions under which the data have been collected. However, according to Proposition~\ref{Pr:ConvergenceToLocalMinima}, it is straightforward to check that as long as $\mathcal{A}_k^*$ changes sufficiently slowly with respect to $k$, \emph{the weight update will always approach the (current) minimizer of (\ref{eq:MSE})}.

The question that naturally emerges is under which conditions such a combined predictor will provide a better estimate compared to the target predictor. Let us use the definition of $\mathbb{E}_j\{\cdot\}$ in the proof of Proposition~\ref{Pr:UniqueMinimizer}. Note that the optimization problem $\min_{\alpha\in\mathcal{A}}R_{k}(\alpha)$ is equivalent to (\ref{eq:PerformanceEquivalentCriterion}), which includes the following expectation:
\begin{eqnarray*}
\lefteqn{\mathbb{E}_j\left\{\magn{(1-\alpha) e_{j} + \alpha e_{S,j}}^2 - \magn{e_{j}}^2\right\} = } \cr
&& \alpha^2\mathbb{E}_j\{e_{S,j}^2\} - \alpha(2-\alpha)\mathbb{E}_j\{e_{j}^2\} + 2\alpha(1-\alpha)\mathbb{E}_j\{e_{S,j}e_{j}\} 
\end{eqnarray*}
For the combined predictor to provide smaller prediction error compared to the target predictor, the above quantity has to be strictly negative. If the ``product bias'' is of the same sign, i.e., $\mathbb{E}_j\{e_{S,j}e_{j}\}\geq{0}$, then $\alpha$ should be sufficiently close to one for the above quantity to be negative. On the other hand, if $\mathbb{E}_j\{e_{S,j}e_{j}\}<{0}$, then an $0<\alpha<1$ may improve the prediction error 
even when $\mathbb{E}_j\{e_{S,j}^2\}$ and $\mathbb{E}_j\{e_{j}^2\}$ are of similar size, i.e., even when the source and target predictor perform equally well on the target data. 

%

\section{Online Transfer Learning from Multiple Source Domains}    \label{sec:OTLfromMultipleSourceDomains}

\subsection{Transfer Component Analysis (background)}		\label{sec:DomainAdaptation}

{\em Domain generalization}~\cite{blanchard2011generalizing,muandet2013domain,grubinger2015domain} addresses mismatches between different input distributions. Across-domain information is extracted from the source domain data (where training data is available) and can be used on the target domains (where no training data is available) without re-training. {\em Transfer Component Analysis (TCA)}~\cite{pan2011domain,grubinger2015domain} is a popular domain generalization technique that aims to learn a shared subspace between different domains. In the shared subspace, the data distributions of different domains should be close to each other and task-relevant information of the original data should be preserved. In addition to the distribution matching, TCA preserves the properties of the data.\footnote{This is achieved by maximally preserving the data variance, similarly to \textit{Principal Component Analysis} (PCA) and \textit{Kernel-PCA}~\cite{Scholkopf:1999KernelPCA}.} 
Furthermore, any machine learning method for regression, classification or clustering can be used on the identified subspace. Although, in its original version by Pan~\textit{et al.}, TCA was introduced for two domains \cite{pan2011domain}, here we utilize an extension to multiple source domains introduced by Grubinger \textit{et al.}~\cite{grubinger2015domain}.

Consider the setting where measurements are available from multiple source houses $i \in 1,\dots,D$, denoted by $\mathbf{X}_{S_i}$ and a target house $\mathbf{X}_{T}$. TCA is applicable if $P(\mathbf{X}_{S_i}) \neq P(\mathbf{X}_{S_j}), 1 \leq i < j \leq D$, where $P(\mathbf{X}_{S_i})$ is the probability distribution of $\mathbf{X}_{S_i}$. \emph{The goal of TCA is to find a kernel-induce feature map $\Psi$ such that $P(\Psi(\mathbf{X}_{S_i})) \approx P(\Psi(\mathbf{X}_{S_j}))$} and $P(\Psi(\mathbf{X}_{S_i})) \approx P(\Psi(\mathbf{X}_{T}))$. Once transformed, the combined source and target data can be used in the subsequent machine learning task. Note though that no data from the target house $\mathbf{X}_{T}$ is available for the construction of $\Psi(\cdot)$. The assumption of domain generalization is that the source and target domains are related such that common information can be learned from the source domains and applied to the target domain.

The goal of matching the probability distributions of the available source domains can be translated into a mathematical program as the following derivation demonstrates. 

Let $\mathbf{K}$ be a combined Gram matrix~\cite{Scholkopf:1999KernelPCA} of the cross-domain data of the source domain datasets $\mathbf{X}_{S_1}\cup \mathbf{X}_{S_2} \cup ... \cup \mathbf{X}_{S_D}$. Each element $K_{i,j}$ of $\mathbf{K} \in \mathbb{R}^{N \times N}$ is given by $\phi(\mathbf{x}_i)\tr \phi(\mathbf{x}_j)$, where $N$ is the total number of instances of the source domain datasets and $\phi$ is a kernel function. Note that, with the employment of a linear kernel function in this works the construction of $K_{i,j}$ simplifies to $\phi(\mathbf{x}_i)\tr \phi(\mathbf{x}_j) = \mathbf{x}_i\tr \mathbf{x}_j$. Let the elements $L_{i,j}$ of $\mathbf{L} \in \mathbb{R}^{N \times N}$ be defined as
\begin{equation}
  {L}_{{i,j}} = \left\{
  \begin{array}{l l}
     \frac{S-1}{N^2n_s^2} & \quad \text{if $\mathbf{x}_i,\mathbf{x}_j \in \mathbf{X}_{S_d}$}\\
    -\frac{1}{N^2n_s n_u} & \quad \text{if $\mathbf{x}_i \in \mathbf{X}_{S_d}, \mathbf{x}_j \in \mathbf{X}_{S_u}$ and $d \neq u$}
  \end{array} \right.
	\label{eq:L}
\end{equation}
where $d,u \in \{1,...,D\}$. With parameter matrix $W \in R^{N \times m}, m \ll N$ and the tradeoff parameter $\mu \geq 0$ for the complexity term $tr(\mathbf{W}^T\mathbf{W})$, the objective of TCA is defined as~\cite{pan2011domain} 
\begin{equation}
	\min_{\textbf{W}} tr(\mathbf{W}\mathbf{W}\tr\mathbf{KLKW}) + \mu \ tr(\mathbf{W}\tr\mathbf{W}), \ \text{s.t.} \  \mathbf{W}\tr\mathbf{KHKW} = \mathbf{I}.
\label{eq:opt}
\end{equation}
Here, the centering matrix $\mathbf{H}$ is defined as $\mathbf{H} = \mathbf{I} - \frac{1}{N}\mathbf{11}\tr$, where $\mathbf{1}\in \mathbb{R}^{N}$ is a column vector with all ones and $\mathbf{I} \in \mathbb{R}^{N \times N}$ is the identity matrix. $tr(\mathbf{W}\tr\mathbf{KLKW})$ corresponds to the mismatch of distribution by the \textit{maximum mean discrepancy (MMD)}~\cite{gretton2006kernel} distance and $\mathbf{W}\tr\mathbf{KHKW}$ is the variance of the projected samples. The embedding of the data in the latent space is given by $\mathbf{W}\tr\mathbf{K}$. As shown by Pan~\textit{et al.}~\cite{pan2011domain}, the solution of $\mathbf{W}$ is given by the $m \ll N$ leading eigenvectors of  
\begin{equation}
(\mathbf{KLK}+\mu \mathbf{I})^{-1}\mathbf{KHK}.
\label{eq:sol}
\end{equation}
Parameter $m$ is commonly selected by cross validation~\cite{pan2011domain,grubinger2015domain}.

\subsection{GOTL under multiple source domains}     \label{sec:GOTLMultipleSourceDomains}

In general, GOTL can be applied to multiple source domains following the same outline as described in Section~\ref{sec:GOTL}. The only difference is the use of $f_{S}(\cdot)$. For the case of a single source house $f_{S}(\cdot) = h_{S}(\cdot)$, where $h_{S}(\cdot)$ is a supervised prediction function. In case of multiple source domains $f_{S}(\cdot) = h_{S}(\theta(\cdot))$, where $\theta(\cdot)=\mathbf{W}^T[\phi(\cdot)^T\phi(\mathbf{x}_j)]_{j \in N}$.

\section{Experimental Setup}	\label{sec:ExperimentalSetup}

In this section, we describe the experimental setup with which the proposed GOTL algorithm was tested for climate control in residential buildings. 

\subsection{Simulation platform}

We used a standard tool for modeling and simulating residential buildings, namely EnergyPlus (V7-2-0) developed by the U.S. Department of Energy \cite{EnergyPlus}. The Building Controls Virtual Test Bed (BCVTB) simulation tool has also been used for allowing data collection and also climate control developed in MATLAB to be implemented during run-time. A three-storey residential building was modeled and simulated with the EnergyPlus environment to allow for collecting data from a realistic residential environment.

\subsection{Data Generation}		\label{sec:DataGeneration}


The simulated house is equipped with a radiant heating system which operates under an intermittent operation (i.e., on/off) pattern. \emph{We are concerned with the prediction and control of the temperature of a single thermal zone of this house}. When evaluating the prediction accuracy of the developed prediction models, the data were generated under a standard hysteresis controller with set temperature equal to $21^oC$ and sampling period $T_{s}=\nicefrac{1}{2}h$ (cf.,~\cite[Section~5.2]{ChasparisNatschlaeger16}), thus assuming normal operating conditions. When, instead, the developed (online) prediction model was also used to provide predictions under an MPC formulation, the training data were generated under the currently implemented MPC. The details of these experiments will be discussed in a forthcoming section  (Section~\ref{sec:PredictionPerformanceComparison}). 

Independently of the data generation process (offline or online), there also exists a natural ventilation system that operates autonomously and with a constant air flow, i.e., there is no heating control through the HVAC system. The following parameters can be measured: the temperature of all thermal zones; the outdoor temperature; the water flow and the inlet water temperature of the radiant heating system; and all exogenous heat sources, namely the solar gain and the occupants presence. 

Lastly, it is important to note that for the low-order linear models identified in this paper, the experiments (offline or online) will be \emph{informative} (cf.,~\cite[Section~5.5]{ChasparisNatschlaeger16}). This is due to the intermittent pattern of both the input and disturbance signals, which result in a sufficient number of distinct frequencies in these signals. In particular, even though the experiments presented in this paper are closed-loop experiments (since the water flow of the radiant heating system depends on the zone temperature), the input signal of the water flow is a nonlinear function of both the occupancy pattern as well as the zone temperature. Thus, as explained in detail in \cite[Section~5.5]{ChasparisNatschlaeger16}, it is sufficient for the occupancy pattern signal to be persistently exciting in order for the experiments to be informative. This is indeed the case in the current experiments, due to the intermittent form of the occupancy pattern.

\subsection{Parameter Setup}	\label{sec:ParameterSetup}

For training of either the source predictor $f_{S}(\cdot)$ or the target predictor $f_{k}(\cdot)$ we employ a linear transfer model with an output-error model structure (cf.,~\cite[Section~3]{Ljung99}). In particular, the function $f$ of the prediction model (\ref{eq:GenericPredictor}) is defined as a third-order linear transfer model of the output and input/disturbance variables (i.e., $\ell=3$). The output variable is the temperature of the thermal zone under investigation, while the input/disturbance variables include the water flow, the inlet water temperature, the neighboring zone temperatures (including the outdoor temperature) and the exogenous heat disturbances. Since we want to evaluate the performance of the combined predictor over the temperature of a thermal zone, \textit{perfect estimates are assumed for all inputs and exogenous disturbances}.

The evaluation interval $T_{\rm hor}$ corresponds to a period of $6h$, which is relevant for predictions requested within an MPC implementation. As already mentioned, the sampling period was set to $T_s=1/2h$. In the computation of the performance function $R_{k}(\alpha)$ defined in (\ref{eq:MSE}), we employ a forgetting factor of $\delta=0.995$. The better reply function accepts increments of $\Delta=0.025$ in the updates of the weight $\alpha_k$. For the source predictor, we utilize a linear regression model, while for the online training of the target predictor, we employ a recursive least squares algorithm (cf.,~\cite{Sayed03}) with a forgetting factor of $0.999$. Finally, for the case of multiple source houses, (Section~\ref{sec:gotl_multiple_source}), TCA is applied with a linear kernel. The best number of principal components is selected from $n=\{5,10,15,20,25,30\}$, using 6-fold cross validation (in every fold 1/3 of one of the training scenarios is used for testing).

\section{Experiments} \label{sec:Results}

\subsection{Knowledge transfer with a single source house}	\label{sec:gotl_single_source}

We demonstrate the performance of GOTL by setting up three experiments. With progressing experiment number, the target house is chosen increasingly different from the source house. Table~\ref{tab:HouseDescription} describes the similarities between the source house and the different target houses. Weather data was collected from Washington, DC, (\emph{source house}) and Linz, Austria, (\emph{target house}) from November to March 2009. In experiment 3 the target house has also different \textit{presence patterns}. Presence patterns differ in the definition of the number of people of certain age and gender, as well as their activities (for more details, see \cite{Martinez14}).
\begin{table}[tbh]
	\centering
		\begin{tabular}{|l|l|l|l|}
		\hline
			Exp & Weather & Size & Presence \\
		\hline
		1	 &  \multirow{2}{140pt}{source: Washington, DC target: Linz, Austria}   & same                                                           &  \multirow{2}{33pt}{same}  \\ \cline{1-1} \cline{3-3} 
	  2	 &                                                                     & \multirow{2}{46pt}{target is 3x larger}                        &  \\ \cline{1-1} \cline{4-4}
 		3	 &                                                                     &                                                                &  different \\ \cline{1-1} 
		\hline
		\end{tabular}
	\caption{Differences between the Target and Source Houses.}
	\label{tab:HouseDescription}
\end{table}

The results of the experiments of Table~\ref{tab:HouseDescription} are depicted in Figure \ref{fig:results1} (Experiment 1), Figure \ref{fig:results2} (Experiment 2) and Figure \ref{fig:results3} (Experiment 3). Four predictors are evaluated: (i) \textit{Source house regressor} ($f_{S}(\cdot)$), trained on $\mathbf{X}_{S}$; (ii) \textit{Target house regressor} (${f}_{k}(\cdot)$), incrementally trained on $\mathbf{X}_{k}$; (iii) GOTL; and (iv) \textit{Ensemble Predictor}: a weighted predictor (\ref{eq:WeightedPredictor}) with a fixed $\alpha=0.5$, similarly to most ensemble methods. In all figures, we demonstrate an exponentially weighted moving average of the \emph{Root Mean Squared Error} (RMSE). 
The lower plot in Figures \ref{fig:results1}-\ref{fig:results3} always depicts the weights for the source and target house regressor at each time step $t_k$. 

\begin{figure}[h!]  
		 \centering
		  \includegraphics[width=0.75\textwidth]{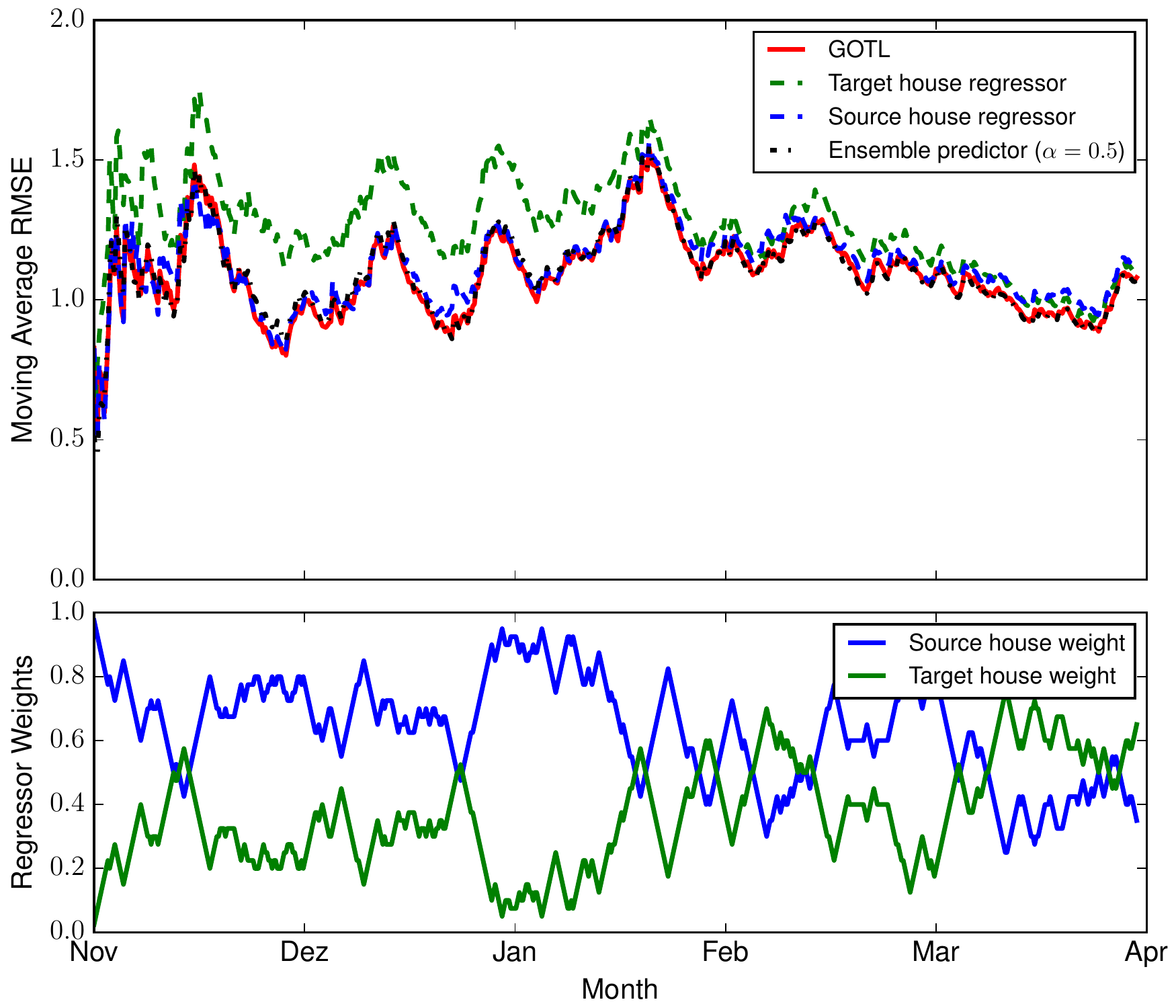}   
   \caption{Results of Experiment 1 as described in Table \ref{tab:HouseDescription}. 
   }    
		\label{fig:results1}
\end{figure}
\begin{figure}[h!]   
		\centering 
		\includegraphics[width=0.75\textwidth]{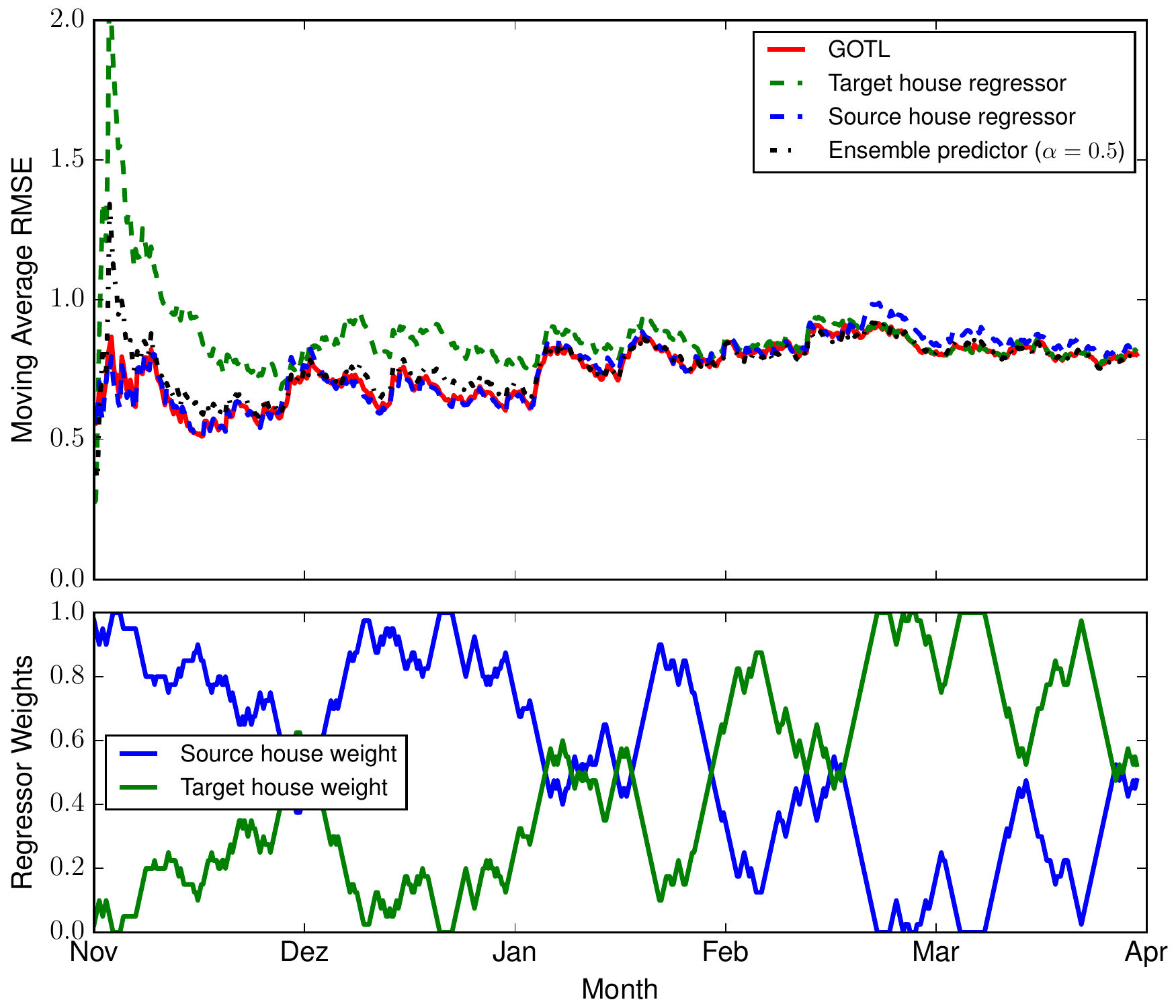}
   \caption{Results of Experiment 2 as described in Table \ref{tab:HouseDescription}. 
   }    		\label{fig:results2}	
\end{figure}
\begin{figure}[h!]   
		\centering 
		\includegraphics[width=0.75\textwidth]{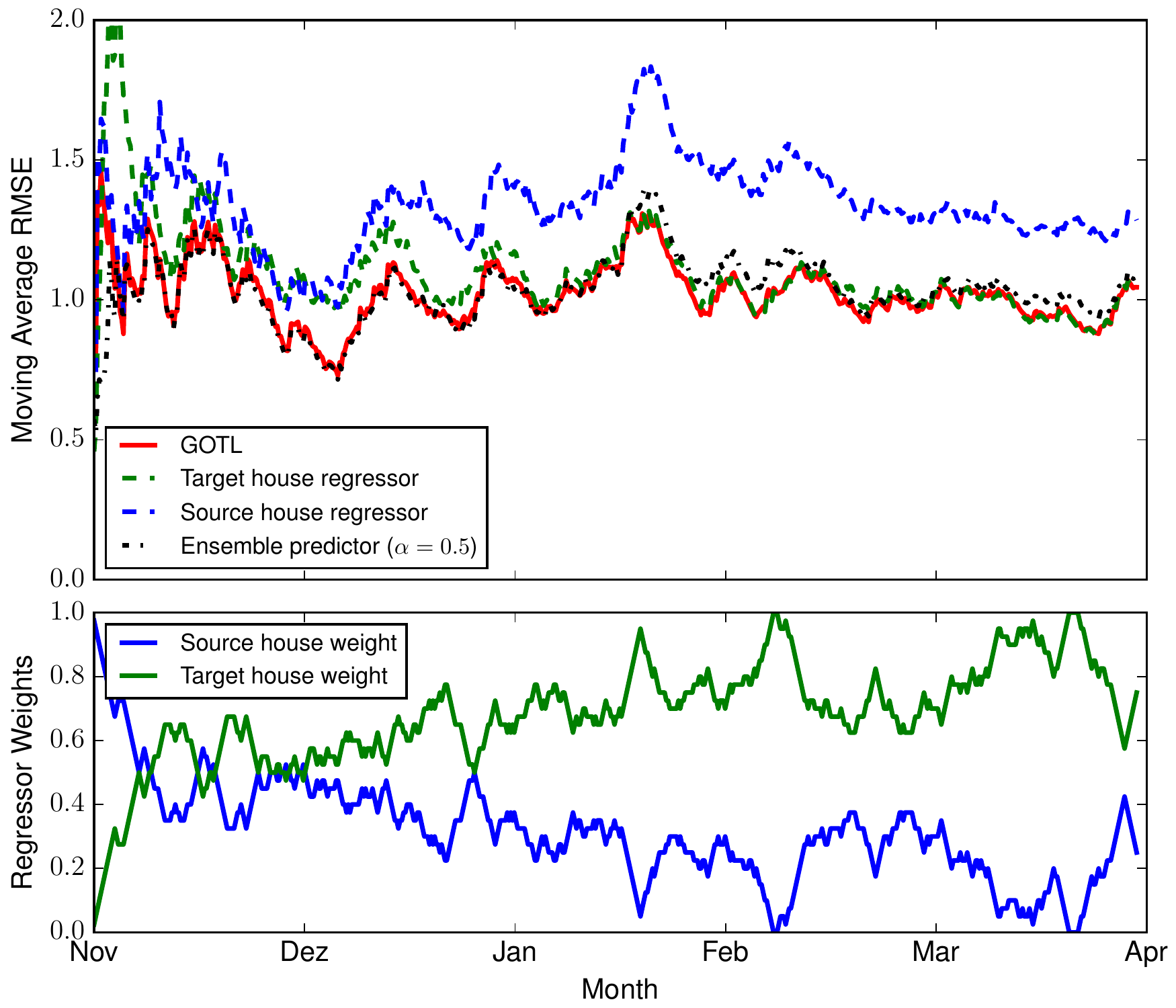}    
   \caption{Results of Experiment 3 as described in Table \ref{tab:HouseDescription}. 
   }   		\label{fig:results3}	
\end{figure}

At the beginning of the evaluation there is not enough data for the target house regressor to make good predictions. Thus, we set the initial evaluation weight to 1 for the source house regressor and 0 for the target house regressor.\footnote{Note that the coefficients for the source house regressor and the target house regressor correspond to $\alpha$ and $(1 - \alpha)$ in Section \ref{sec:OnlineTransferLearning}, respectively.} The data for the target house regressor get richer over time. Consequently, the error of the target house regressor drops and the weight of the target house regressor increases.

The observations from all three experiments can be summarized as follows: (i) In the first period -- in which the train house regressor is better than the target house regressor -- GOTL has a similar performance as the train house regressor (approx. 16 weeks in Figure \ref{fig:results1}, approx. 12 weeks in Figure \ref{fig:results2} and only 1-2 weeks in Figure \ref{fig:results3}). The much shorter time horizon in Figure \ref{fig:results1} is not unexpected as the target house is more different (the other houses use the same people presence) than the source house -- compared to the other experiments. The only difference between the Experiment 1 and 2 is the size of the house, which explains the slightly longer time horizon where the train house is better than the source house. (ii) In the second period, the weighted combination is at least as good as the target house regressor alone, or better. Particularly in Experiment 2, GOTL is much better than either the train house regressor or the test house regressor alone. In all experiments, GOTL is also at least as accurate as the simple ensemble predictor.

\subsection{Knowledge transfer with multiple source houses using TCA}	\label{sec:gotl_multiple_source}

We demonstrate that data from multiple source houses can be combined by the domain generalization method TCA and predictive accuracy can be improved -- compared to the use of only one source house. Moreover, the combined source house predictor can be combined with the target house regressor using GOTL. 

For our evaluation, we use 2 source houses. Weather data from both source houses originate from Washington, DC. \emph{Source house regressor 2} has three times the size of \textit{source house regressor 1}. The used target house has two times the size of \textit{source house regressor 1} and uses weather data from Linz, Austria. All three houses facilitate a different presence pattern.

The results are depicted in Figure~\ref{fig:results4}. It can be observed that the combined source house regressor is much better than using either one of the source houses alone. One reason for the good source house performance is that the source houses are selected, such that the target house size is just in between the sizes of two source houses. GOTL performance is comparable to the combined source house regressor in the first 5 weeks and better than all other regressors afterwards.

\begin{figure}[t!]   
		\centering 
		\includegraphics[width=0.75\textwidth]{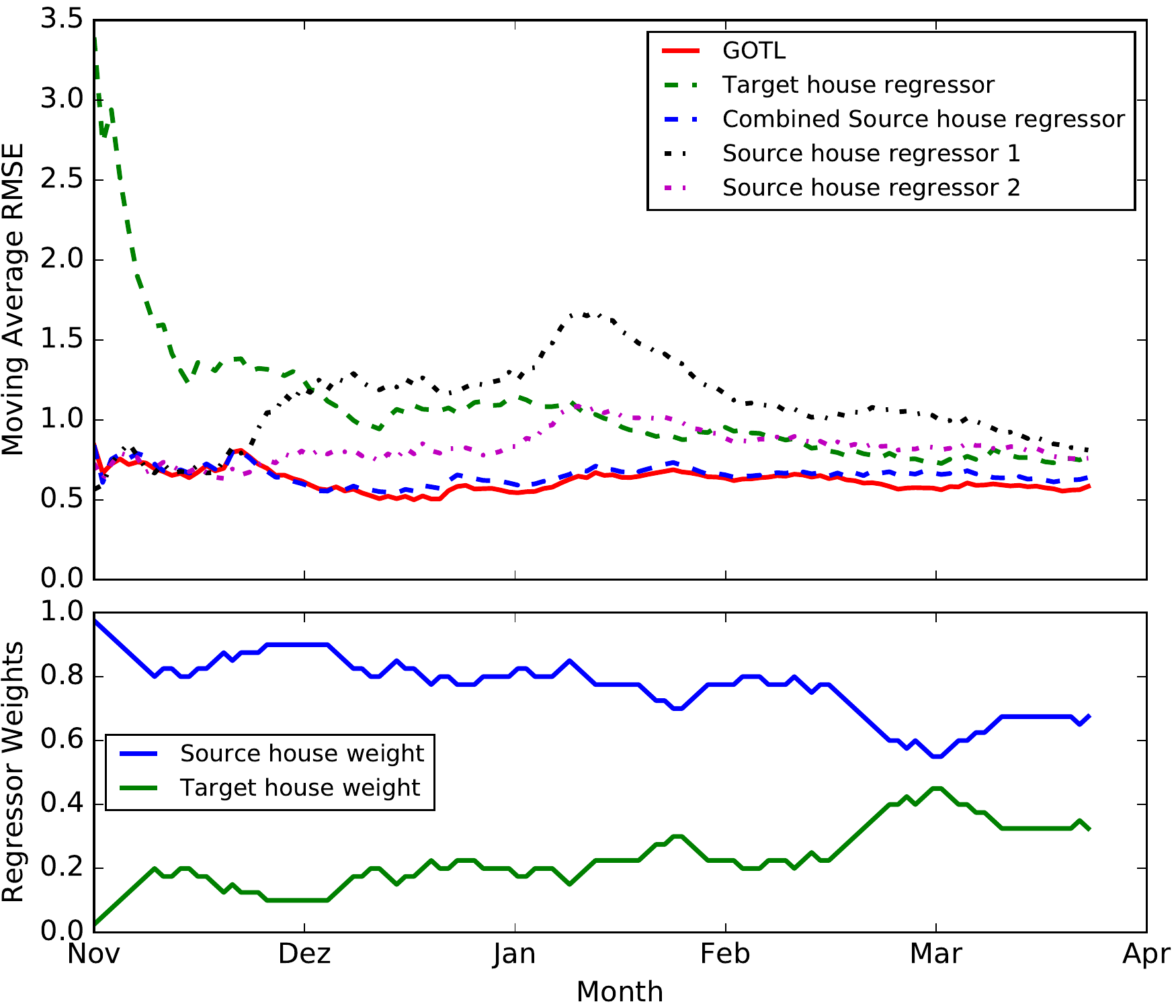}    
   \caption{Results of Experiment 4 as described in Section~\label{sec:Experiments2}}. 
   \label{fig:results4}	
\end{figure}

\subsection{MPC experiments}	\label{sec:PredictionPerformanceComparison}

In the previous experiments, we evaluated the prediction performance of the GOTL algorithm when data were generated under a standard hysteresis controller. In those experiments, the prediction models derived were not used in the controller design process, since the objective was primarily the evaluation of the prediction accuracy. In this section, instead, we wish to evaluate the performance of the introduced online transfer learning methodology when employed within an MPC formulation for climate control, i.e., when the prediction models are directly used in the control design. Questions related to whether the derived prediction models may reduce the energy consumption naturally emerge.

To this end, we designed a standard MPC for the radiant-heating system of the main living area of the residential building. The goal is to evaluate the performance of the GOTL algorithm with respect to the energy consumption and compare it with the case that such online transfer models are not available. The structure of the MPC employed is rather simple and addresses the following optimization problem
\begin{subequations}	\label{eq:MPCformulation}
\begin{align}
\min &&& \kappa\sum_{t=0}^{N_{\rm hor}}\Big\{\hat{p}(t) \left(\hat{T}_{r}(t)-T_{\rm set}(t)\right)^2/N_{\rm hor}\Big\} + \nonumber \\ 
&&& \sum_{t=0}^{N_{\rm hor}-1}\Big\{ \beta T_{s} \left(T_{w}^{+}(t)-\hat{T}_{w}^{-}(t)\right)\Big) + \gamma T_{s} \dot{V}_{w}(t)  \Big\} \label{eq:MPCObjectiveFunction} \\ 
\mbox{var.} &&& T_{w}^{+}(t)\in\{45^{o}C\}, \\
&&& \dot{V}_{w}(t)\in\{0,\dot{V}_{w,{\rm max}}\}, \\
&&& t=0,1,2,...,N_{\rm hor}-1, \nonumber
\end{align}
\end{subequations}
where $\hat{T}_{r}$ is the temperature prediction of the room provided by the available prediction model (e.g., the GOTL model), $T_{\rm set}$ is the desired/set temperature, $T_{w}^{+}$ is the inlet water temperature of the radiant heating system, $\hat{T}_{w}^{-}$ is a prediction of the outlet water temperature, and $\dot{V}_{w}$ is the water flow.

Note that the first part of the objective function (\ref{eq:MPCObjectiveFunction}) corresponds to a \emph{comfort} measure scaled with a positive constant $\kappa$. It measures the average squared difference of the room temperature from the desired (or set) temperature entered by the user at time $k$. The set temperature was set equal to $21^oC$ throughout the optimization horizon. The variable $\hat{p}\in\{0,1\}$ is a boolean variable that corresponds to our estimate on whether people are present in the room at time instance $t$. 

The second part of the objective function (\ref{eq:MPCObjectiveFunction}) corresponds to the \emph{heating cost}, while the third part corresponds to the \emph{pump-electricity cost}. The nonnegative parameters $\beta$, $\gamma$ were previously identified for the heating system of the simulated house and take values: $\beta= 0.3333 kW/^oCh$, $\gamma=0.5278\cdot{10}^{3} kWsec/h m^3 $. The non-negative constant $\kappa$ is introduced to allow for adjusting the importance of the comfort cost compared to the energy cost. 

The $\hat{p}(t)$, $t=1,2,...,N_{\rm hor}$, as well as the rest of the disturbances (such as the outdoor temperature and the solar gain) are assumed given (i.e., predicted with perfect accuracy). This assumption is essential in order to evaluate precisely the impact of our prediction model $\hat{T}_{r}$ in the performance of the derived optimal controller. In fact, we should expect that an accurate prediction of the room temperature will also lead to an efficient controller with respect to the energy consumption.

The sampling period was set to $T_{\rm s} = \nicefrac{1}{2}h$, the optimization period was set to $T_{\rm opt}=1h$, and the evaluation/optimization horizon was set to $T_{\rm hor}=6h$. This implies that $N_{\rm hor}=6 \cdot 2 = 12$. Furthermore, the control variables are the inlet water temperature which assumes a single value ($45^oC$) and the water flow which assumes only two values, the zero flow and the maximum one, $\dot{V}_{w,\max}=0.0787kg/sec$. 

In Figure~\ref{fig:consumption_comfort}, we have generated the comfort-heating cost curve for the available prediction models under varying $\kappa$ (i.e., under different weights on the comfort cost of the objective function (\ref{eq:MPCObjectiveFunction})). The reason for generating such performance curves is the fact that the performance of any two controllers with respect to energy efficiency may \emph{only} be compared under the same comfort-cost level. Of course, a controller that achieves lower heating costs under the same comfort cost will be more efficient. Experiments were conducted over a five-month period under Experiment~3 of Table~\ref{tab:HouseDescription}. The prediction model of the target house was trained online with the data generated under the MPC implementation, using the recursive least squares of Section~\ref{sec:ParameterSetup}. The weights $\alpha$ of the GOTL predictor were also updated online according to the online algorithm of Table~\ref{Tb:GOTL}. 

As we observe in Figure~\ref{fig:consumption_comfort}, the heating costs under the GOTL predictor are lower than any other prediction model. Under high comfort costs, i.e., when comfort is not of high priority, the energy consumption is very close to the energy consumption of the train model. On the other hand, under low comfort costs, i.e., when comfort is a priority, the energy consumption of the GOTL predictor is about 100 kWh lower compared to the trained predictor (over the five-month period), which is a non-negligible amount of energy. It is also important to note that the performance of the test predictor with respect to the energy consumption is rather poor for these first five months. This verifies our claim over the utility of online transfer learning methodologies for climate control in residential buildings.

\begin{figure}[t!]   
		\centering 
		\includegraphics[width=0.8\textwidth]{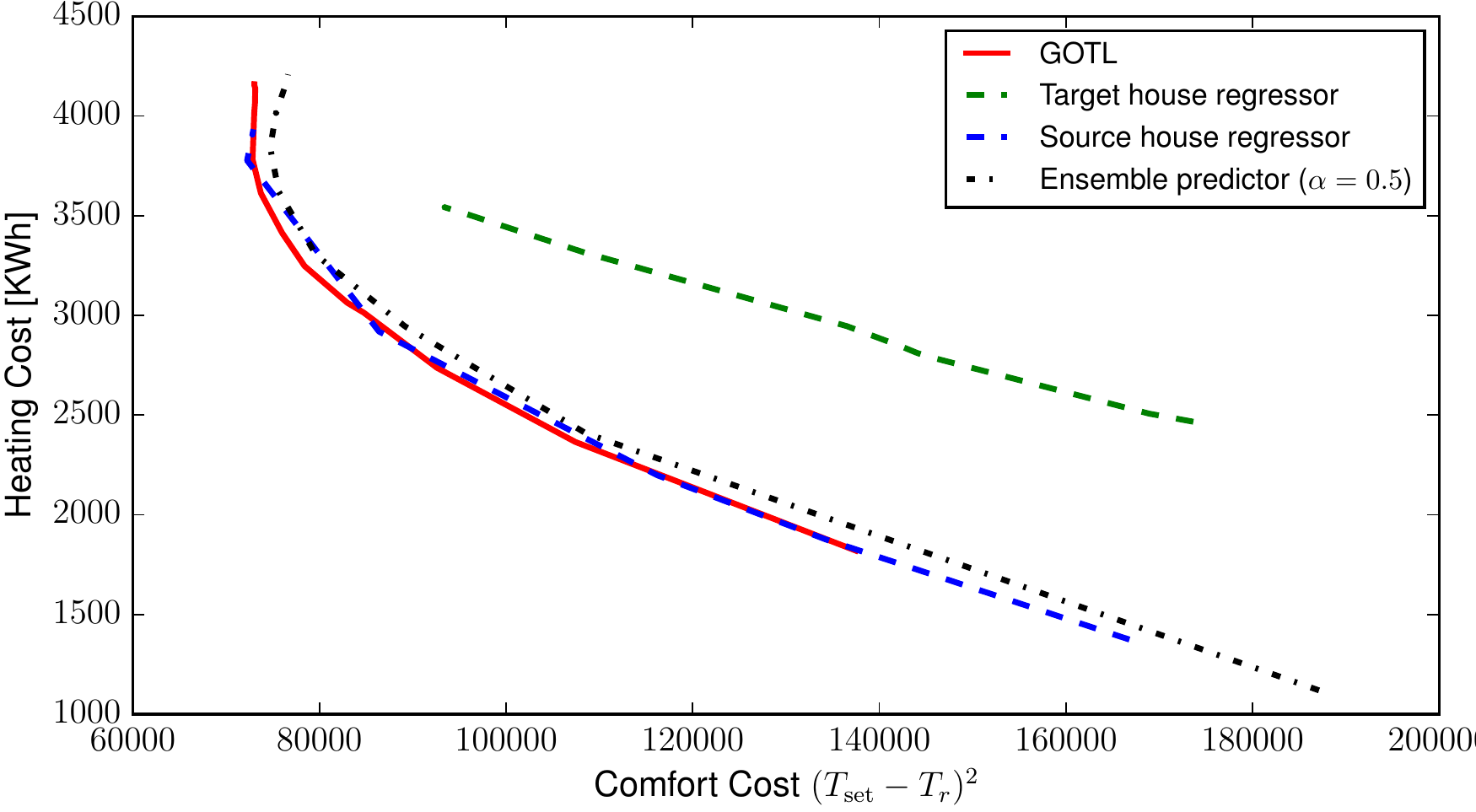}    
   \caption{MPC experiments}. 
   \label{fig:consumption_comfort}	
\end{figure}


\begin{figure}[th!]
\centering
\end{figure}

\section{Conclusions \& Future Work} \label{sec:Conclusions}


We presented the online transfer learning framework GOTL, which is applicable for classification and regression tasks and allows to optimally combine an (offline) source domain predictor with an online target domain predictor. The results demonstrated the utility of the combined predictor to significantly improve prediction accuracy in the first weeks and months of a new building -- compared to either using the source house predictor or target house predictor alone. Further improvements in predictive accuracy can be achieved by facilitating multiple source houses and TCA. Improvements in predictive accuracy also translate into non-negligible energy savings for given comfort levels. 

It is also important to note the adaptive response of the proposed online transfer learning algorithm. Although the above experiments were evaluated over a period of five winter months, training of an online regressor over longer periods of time may be challenging. This is primarily due to changes either in the weather conditions or in the heating patterns. In such variations, degradations in the prediction error of a target predictor will be significantly reduced by considering the GOTL algorithm, as can easily be seen from the behavior of GOTL during the first few weeks in all the considered experiments.



\section{Bibliography}  \label{sec:Bibliography}
\bibliographystyle{elsarticle-num} 
\bibliography{2016_EB_GOTL_Bibliography}


\end{document}